    \newtheorem{theorem}{Theorem}
    \newtheorem{lemma}[theorem]{Lemma}
    \newtheorem{corollary}[theorem]{Corollary}
    \newtheorem{definition}[theorem]{Definition}
\newenvironment{proof}{\paragraph{Proof:}}{\hfill$\square$}
\newcommand{\pr}{{\bf Pr}}
\newcommand{\ignore}[1]{{}}
\newcommand{\commented}{no}
\newcommand{\jnote}[1]{}
\newcommand{\snote}[1]{}
\newcommand{\rnote}[1]{}
\title{Approximation Algorithms for Budget Constrained Network Upgradeable Problems }
\author{Debjyoti Saharoy and Sandeep Sen\\
Department of Computer Science and Engineering, \\
IIT Delhi, New Delhi 110016, India\thanks{Email addresses: {\{debjyoti,ssen\}@cse.iitd.ac.in}} 
}
\begin{document}

\maketitle
\begin{abstract}
We study \emph{budget constrained network upgradeable problems.} We are given 
an undirected edge weighted graph $G=(V,E)$ where the weight an edge 
$e \in E$ can be upgraded for a cost $c(e)$. 
Given a budget $B$ for improvement, the goal is to find a subset of edges
to be upgraded so that the resulting network is optimum for $B$. 
The results obtained in this paper include the following.
\begin{enumerate}
\item { \em Maximum Weight Constrained Spanning Tree}
\par We present a randomized algorithm for the problem of weight upgradeable 
budget constrained maximum spanning tree on a general graph. This returns 
a spanning tree 
$\mathcal{T}^{'}$ which is feasible within the budget $B$, such that
$\pr[ l(\mathcal{T}^{'}) \geq (1-\epsilon)\text{OPT}\text{ , } c(\mathcal{T}^{'}) \leq B] \ge 1-\frac{1}{n}$ 
(where $l$ and $c$ denote the length and cost of the tree respectively),
for any fixed $\epsilon >0$,
in time polynomial in $|V|=n$, $|E|=m$. Our results extend to the minimization
version also.
Previously Krumke et. al. \cite{krumke} 
presented a
$(1+\frac{1}{\gamma}, 1+ \gamma)$ bicriteria approximation algorithm for any 
fixed $\gamma >0$  for this problem in general graphs for a more general
cost upgrade function. The result in this paper improves their 0/1 cost 
upgrade model. 
\item {\em Longest Path in a DAG}
\par We consider the problem of weight improvable longest path in a $n$ vertex DAG and give a $O(n^3)$ algorithm for the problem when there is a bound on the number
of improvements allowed. We also give a $(1-\epsilon)$-approximation which runs in $O(\frac{n^4}{\epsilon})$ time for the budget constrained version.
Similar results can be achieved also for the problem of shortest paths in a DAG.
\end{enumerate}
\end{abstract}
\section{Introduction}
We consider optimization problems in budget constrained network 
{\it upgradeable} 
model, where the exact cost-benefit trade-offs are known\footnote{which may
not hold for an online model} a priori. 
In such optimization problems, there are two kinds of problem data: 
First, the nominal or unimproved data, and second, {\it improved} data of 
several stages that would allow solutions with better performance. 
We are allowed to use the improved data by incurring a cost within an 
overall budget for all the improvement cost.
Several computational problems fall in this category. 
For example, consider the 
problem of upgrading arcs in a network to minimize travel time \cite{NET:NET20097}. Another example would be a variant of min cost flow problem.
Here we are allowed to lower the flow cost of each arc,
and given a flow value and a bound on the total budget which can be used for lowering the flow costs, the goal is to find an upgrade strategy and a flow of minimum cost. This problem is considered by
\cite{Demgensky2002265}. In \cite{Krumke1999139} Krumke et. al. consider the problem of improving spanning trees by upgrading nodes. In communication networks 
upgrading could mean installing 
faster communication device at a node. 
%shortest path problem, where we can improve some of the arcs in the graph; the maximum flow problem, where we may increase arc capacities; or the knapsack problem, where item weights can be reduced.
Typically, such an upgradeable extension increases the complexity of an otherwise polynomially solvable problem to being NP-hard. The constrained minimum 
spanning tree problem subject to an additional linear constraint
was shown to {\it weakly}  NP Hard in \cite{Aggarwal1982287}. 

{\bf Remark} In the remaining paper, we will use the term {\it improvable} instead of
{\it upgradeable} as a common term to address both increase or
decrease in the parameter value, (as the case may be), to obtain a 
superior or improved objective value.

The {\em improvable} version of a problem is distinct from optimization
version that may have limited choices for certain kinds of objects.
For example, we may have multiple (parallel) edges for the network
problems offering trade-offs between cost and weight. This can be captured
within the framework of the original problem, using additional constraints,
and keeping the objective function unchanged.
The improvable version works on a budget constraint that is distinct from
the original constraints and there may not be any well-defined conversion 
between the budget constraint and the original constraints. For example,
in the constrained spanning tree problem in \cite{Aggarwal1982287}, the 
lengths and the weights are distinct parameters.
\par  Computing a minimum or maximum weight spanning tree for a network is a well studied problem in computer science.
Here we consider a variant of the maximization version of the problem,  
where the goal is to increase the edge lengths of a given network so that the length of a maximum
spanning tree in resulting network is as large as possible. The problem is considered in a context 
where there is a cost associated with improving an edge and there is a budget constraint on the total
cost of improving the edges. We denote the problem as $IMST$ for \emph{Improvable Maximum Spanning Tree}, the optimum tree as $\mathcal{T}^*$, its length as $l(\mathcal{T}^*)$ and the total cost of the higher length valued
edges as $c(\mathcal{T}^*).$ A precise definition of the problem is given in section \ref{sec:imst}. Since the maximum weight base problem on a matroid is equivalent to minimum cost base problem (by applying 
suitable transformations on weights) so our results and analysis hold 
even for weight improvable budget constrained minimum spanning tree on general graphs.
\par The second problem that we consider is the budget constrained 
network improvable longest or shortest path in a directed acyclic graph. 
Since the longest and shortest path problems in a DAG are related, we 
consider the former but similar
results and analysis would hold for the later as well. The longest path problem is the problem of 
finding a simple path of maximum length in a graph. The longest path
problem is NP-Hard and there is no polynomial time constant-factor
approximation algorithm unless P=NP (\cite{kar97}). However, the problem becomes 
polynomial time solvable on directed acyclic graphs. We investigate
this problem in budget constrained network improvable setting. 
We refer to the problem as $WILDAG$ for \emph{Weight-Improvable-Longest-Path-in-DAG}.
Longest path algorithms have applications in diverse fields.
The well known Travelling Salesman Problem is a special case of the
Longest Path Problem (\cite{1962}). The longest path in a program
activity graph is known as the critical path which represents the
sequence of program activities that take the longest time to execute. 
Longest path algorithms are required to calculate critical paths.
\subsection{Other Related Work}
More examples and applications of computational problems in the improvable framework can be found in \cite{impro-knap}. 
Goerigk, Sabharwal, Sch\"obel and Sen \cite{impro-knap} considered the 
weight-reducible knapsack problem, for which they gave a polynomial-time 
3-approximation and an 
FPTAS for the special case of uniform improvement costs. 
The problem of budget constrained network improvable spanning tree has been proved to 
be NP-hard, even for series-parallel graphs, by Krumke et. al. 
\cite{krumke}, which also cite several practical applications. 
Frederickson and 
Solis-Oba \cite{Fred} considered the problem of increasing 
the weight of the minimum spanning
tree in a graph subject to a budget constraint where the cost functions 
increase 
linearly with weights. 
Berman et. al. \cite{berman} consider the problem of shortening edges
in a given tree to minimize its shortest path tree weight. In contrast to 
most problems in the
network upgradation model, this problem was shown to be solvable in strongly polynomial time.
Phillips \cite{phil} studied the problem of finding an optimal strategy for reducing the capacity of the 
network so that the residual capacity in the modified network is minimized. 

Most of the network
modification results can be broadly classified as bi-criteria problems which are
characterized as $(\alpha , \beta )$ approximation if the algorithm achieves
factor $\alpha$ (respective $\beta$) approximation w.r.t. to the first 
(second) parameter.  
If $\alpha$ (or $\beta$) equals 1, then it is at least as good as the optimal
solution w.r.t. to the first (respectively second) parameter.
For the improvable spanning tree problem, the two parameters are the total
weight of the spanning tree and the budget available for improving the 
spanning tree edges. Krumke et al. \cite{krumke} show that Tree-width 
bounded graphs with linear reduction costs are  
$(1+\epsilon, 1+\xi)$-approximable for any fixed $\epsilon$, $\xi >0$. They 
also show general graphs 
are $(1+\frac{1}{\gamma}, 1+ \gamma)$-approximable for any fixed $\gamma >0$ 
which implies a trade-off between the two approximation factors (note that the 
balance occurs for $\gamma = 1$). In the more elaborate journal 
version, the authors \cite{Drangmeister} actually
consider three distinct models of improvements where the $0/1$ reduction comes
closest to this paper.
Ravi and Goemens \cite{ravigoemens96constrainedmst} have studied the constrained minimum spanning tree problem with two independent weight function
on the edges and 
gave a $(1,1+\epsilon)$ polynomial time approximation. 
Their method can be used to  derive a $(1+\epsilon,1)$ approximation for the 
constrained minimum spanning tree problem
that runs in pseudo-polynomial time (\cite{mad}).
\subsection{Our Contributions}
\par For the problem of \emph{IMST} which allows multistage improvements,  
we present a randomized algorithm that returns a spanning tree 
$\mathcal{T}^{'}$ which is feasible within the budget $B$ and has length at least $(1-\epsilon)$ times the OPT\footnote{The value of the optimum spanning tree
among those trees that use up a total improvement budget $B$}  with high probability, i.e
$\pr[ l(\mathcal{T}^{'}) \geq (1-\epsilon)\text{OPT}\text{ , } c(\mathcal{T}^{'}) \leq B] \ge 1-\frac{1}{n},$ for any fixed $\epsilon >0$,
in time polynomial in $|V|=n$, $|E|=m$. Our algorithm does not make any assumptions on the structure of the graph and works for general graphs. 
% Ravi and Goemens \cite{ravigoemens96constrainedmst} have also studied the constrained minimum spanning tree problem and 
% gave a $(1,1+\epsilon)$ polynomial time approximation.
%Previously, Krumke et. al. \cite{krumke} studied the problem of budget minimum total cost spanning tree and
% and  
% showed that tree-width bounded graphs with linear reduction costs are  
% $(1+\epsilon, 1+\xi)$-approximable for any fixed $\epsilon$, $\xi >0$. They also 
%showed general graphs for arbitrary improvement cost function
%are $(1+\frac{1}{\gamma}, 1+ \gamma)$-approximable for any fixed $\gamma >0$.
\par For the problem of \emph{WILDAG} we give 
an $O(n^3)$ algorithm for the special case of uniform improvement cost for each edge. We also consider the more general version with arbitrary 
improvement costs and a 
budget constraint on the total improvements and give a fully polynomial time 
approximation scheme for this problem. 
\par The primary observation that we exploit is that the optimal solution
comprises of the improved and unimproved versions in some ratio which is
not known but can be approximated in some way. For example, if we restrict
ourselves to the unweighted version (of the maximization problem), we can say that at least half of the
objective value is due to the improved or the unimproved elements. If we
know how to solve each version separately, even approximately, we can 
combine them to obtain a constant factor approximation overall. This idea
extends to the weighted version also. We also make use of
dynamic programming and scaling to obtain our results. Our techniques are
quite general and likely to be useful for other improvable problems.
\subsection{Organization of the paper}
\par In section \ref{sec:imst} we investigate the problem $IMST$. Beginning with a simple $\frac{1}{2}$-approximation for the special case
of uniform improvement costs in section \ref{subsec:pd1}, we consider 
more general improvement costs in section \ref{sec:wt_algo} and 
give a randomized 
bi-criteria $(1-\epsilon, 1)$-approximation. We extend the algorithm in 
\ref{sec:wt_algo} to handle multistage improvements in \ref{subsec:multilevel}.
\par In section \ref{sec:longest-DAG}, we consider the problem of longest 
path in DAG (\ref{sec:longest-DAG}) in the improvable framework. For a gentle 
exposition of the ideas and techniques, 
we first deal with the $WILDAG$ problem
for the special case of equal improvement cost for each edge and give an 
$O(n^3)$ algorithm. In section \ref{subsec:non-uni} we consider the general 
version with non-uniform improvement 
costs and a budget constraint and present 
a fully polynomial time approximation scheme in section \ref{sec:fptas}.
\section{Randomized $(1-\epsilon)$-approximation for $IMST$}
\label{sec:imst}
Given an undirected graph $G=(V,E)$, and nonnegative integers $l_e$ and $h_e$ 
(i.e each edge can be thought to have two copies with length $l_e$ and $h_e$), an improvement cost $c_e$, for each edge $e \in E$ ,and a budget $B$, we consider the problem of finding a 
spanning tree with maximum total edge length under the restriction that each time a higher valued edge is used, the associated improvement cost is 
incurred and the total improvement cost can be at most $B$. Similar to the
constrained MST problem, this variation is also computationally intractable. 
For completeness, we present a simple proof along the lines of
\cite{Drangmeister,Aggarwal1982287}.
\begin{theorem}
The Improvable MST (\emph{IMST}) problem is NP hard even when the graph is a tree.
\label{nph-1}
\end{theorem}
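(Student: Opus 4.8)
The statement to prove is that IMST is NP-hard even when the input graph is a tree. The plan is to reduce from a known weakly NP-hard problem, and the natural candidate is PARTITION (or its close relative, SUBSET-SUM/KNAPSACK), following the style of the hardness proofs cited in \cite{Aggarwal1982287, Drangmeister}. The key insight is that when $G$ is a tree, there is no choice about which edges to include in the spanning tree—the tree itself is the unique spanning tree. Hence the only decision is \emph{which edges to improve} subject to the budget $B$, and this decision has to be forced to encode a number-theoretic constraint.

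First I would take an instance of PARTITION: a set of positive integers $a_1, \dots, a_n$ with total sum $2S$, and ask whether there is a subset summing to exactly $S$. I would build a tree (most simply a path, or a star) with one edge $e_i$ per element $a_i$. For each edge I would set the improvement cost $c_{e_i} = a_i$ and choose the low and high lengths $l_{e_i}, h_{e_i}$ so that improving edge $e_i$ raises its contribution by an amount tied to $a_i$. Setting the budget to $B = S$ forces the set of improved edges to have total cost exactly (or at most) $S$; the trick is to engineer the lengths so that the total tree length hits a prescribed threshold $L$ if and only if the improved edges correspond to a valid partition. A clean way is to make both the cost saved/spent and the length gained proportional to $a_i$, so that maximizing length under the budget is equivalent to packing weight $a_i$ as tightly as possible against the bound $S$, which is exactly SUBSET-SUM feasibility at the boundary.

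The main obstacle I anticipate is the two-sided forcing: I need the length threshold $L$ to be achievable \emph{only} by an improvement set whose cost is exactly $B$, not merely at most $B$. If length gain and cost are both exactly $a_i$ per edge, then any improvement set of cost $c \le B = S$ yields length gain exactly $c$, and the maximum achievable length gain within budget is simply $S$ regardless of partition structure—so this naive encoding collapses and does not capture PARTITION. To fix this I would decouple the two coefficients: for instance, give each edge length gain $a_i$ but allow the budget to be spent in a way that creates a genuine packing tension, or alternatively reduce from a problem whose hardness is inherently a single knapsack-style bound rather than an exact partition. The cleanest resolution is to reduce directly from KNAPSACK (maximize total value subject to a weight budget): set $c_{e_i}$ to the item weight, let the length gain $h_{e_i} - l_{e_i}$ equal the item value, put $B$ equal to the knapsack capacity, and observe that on a tree the improvable MST objective becomes exactly ``maximize total value of improved edges subject to total weight at most $B$,'' i.e.\ the KNAPSACK objective verbatim.

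With that reduction the correctness argument is immediate and the remaining work is routine: the construction is polynomial in the input size, the unique spanning tree of a tree removes all structural choice, and a solution to IMST of length at least $\sum_e l_e + K$ within budget $B$ corresponds bijectively to a KNAPSACK solution of value at least $K$ within capacity $B$. Since KNAPSACK is (weakly) NP-hard, so is IMST restricted to trees, matching the ``weakly NP-hard'' phenomenon already noted for the related constrained MST problem in \cite{Aggarwal1982287}. I would close by remarking that the reduction uses only $0/1$ improvements and integer data, so the hardness persists in the basic single-stage model studied in the rest of the section.
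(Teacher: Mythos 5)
Your final reduction---from KNAPSACK, with one tree edge per item, improvement cost $c_{e_i}$ equal to the item weight, length gain $h_{e_i}-l_{e_i}$ equal to the item value, and budget $B$ equal to the capacity---is exactly the paper's proof, which likewise takes $l_e=0$, $h_e=p_e$ on a tree so that the unique spanning tree removes all structural choice and the IMST objective becomes the knapsack objective verbatim. Your exploratory detour through PARTITION (and your correct diagnosis of why that naive encoding collapses) is unnecessary but harmless; the proposal is correct and matches the paper's argument.
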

\begin{figure}
\begin{center}
\includegraphics[width=10cm]{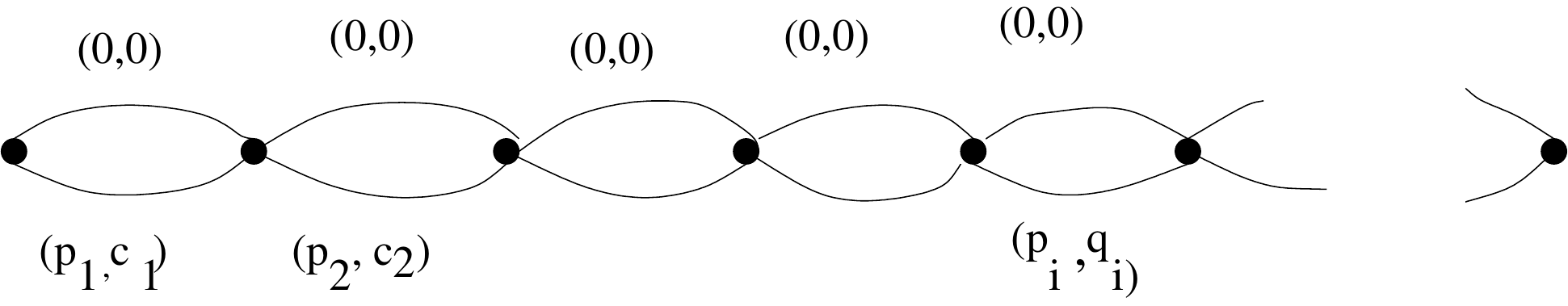}
\caption{Reducing Knapsack to IMST and WILDAG}
\end{center}
\label{fig:redn}
\end{figure}
\begin{proof}
We can reduce the knapsack problem using a construction similar to
\cite{Aggarwal1982287} - see Figure 1. The knapsack profit function
is $\max \sum_e p_e x_e $ and the constraint is
$ \sum_e c_e \cdot x_e \leq B $ 
where $B$ is also the budget constraint of the IMST problem. Wlog, we have
shown the two different versions of the edges as parallel edges, since
any spanning tree will include exactly one of the two edges. Note that
$l_e = 0$ and $h_e = p_e$. The IMST
solution will directly yield a solution to the knapsack problem.
\end{proof}
%\subsection{The approximation algorithm}
%\label{sec:unwt_algo}
\subsection{Uniform improvement}
\label{subsec:pd1}
If the improvement costs are uniform, i.e., $c_e =1$, then the problem 
reduces to finding a spanning tree with maximum length subject to a cardinality constraint (say $k$) on the number of higher 
length valued edge used. Let us denote the problem as $UIMST$ 
for \emph{Uniform Improvable Maximum Spanning Tree}, the optimum tree as $\mathcal{T}^*$, and its length as $l(\mathcal{T}^*)$.  
\subsubsection{The Algorithm}
\label{subsec:unwt_algo_basic}
Note that the optimum tree is composed of two spanning forests one with edges having lengths $l_e$ and the other with lengths $h_e$ (at most $k$ of them). We denote the two forests
as $\mathcal{F}_{1}^{*}$ and $\mathcal{F}_{2}^{*}$ and their respective lengths as $ l(\mathcal{F}_{1}^{*}) $ and $ l(\mathcal{F}_{2}^{*}) $. Our algorithm is based
on finding two spanning trees $\mathcal{T}_{1}$ and $\mathcal{T}_{2}$ such that 
\begin{equation} l(\mathcal{T}_{1}) + l(\mathcal{T}_{2}) \geq l(\mathcal{F}_{1}^{*}) +l(\mathcal{F}_{2}^{*}) = l(\mathcal{T}^{*}) \nonumber
\end{equation} 
So that $\max{} {\{ l(\mathcal{T}_{1}), l(\mathcal{T}_{2})\}}  \geq \frac{1}{2} l(\mathcal{T}^{*}) $. Thus the better of the two trees $\mathcal{T}_{1}$, $\mathcal{T}_{2}$
is a $\frac{1}{2}$-approximate solution of \emph{UIMST}.
%\subsubsection{Finding two spanning trees}
%\label{subsec:find_unwt_trees}
%Now we devote our attention towards finding the two trees  
%$\mathcal{T}_{1}$ and $\mathcal{T}_{2}$. 
We can obtain $\mathcal{T}_{1}$ by invoking
Kruskal's algorithm for maximum spanning tree \cite{kruskal56algo} on the 
lower length valued edges, i.e $l_e$ for each $e \in E$. 
Clearly
$l(\mathcal{T}_{1}) \geq l(\mathcal{F}_{1}^{*})$. To find the tree $\mathcal{T}_{2}$ we first invoke Kruskal's algorithm \cite{kruskal56algo} on the high valued
edges, i.e $h_e$ for each $h_e \in E$, to obtain the maximum spanning forest in $G$ with at most $k$ edges where $k$ is the maximum number of improvements
allowed (we terminate the greedy algorithm after $k$ edges have been 
chosen). This forest can be extended to a tree $\mathcal{T}_{2}$ as 
spanning trees form the 
bases of a Graphic Matroid. Again, we have $l(\mathcal{T}_{2}) \geq l(\mathcal{F}_{2}^{*}).$ 
This completes the algorithm of section ~\ref{subsec:unwt_algo_basic}.
\subsection{Arbitrary improvement: An approximation algorithm}
\label{sec:wt_algo}
%\label{sec:pd}
We now address the general case where $c_e$'s can be arbitrary. 
Let us denote the 
the maximal tree by $\mathcal{T}^*$, its length as $l(\mathcal{T}^*)$ and the total cost of the higher length valued
edges as $c(\mathcal{T}^*).$
%\subsection{The approximation algorithm}
%\label{sec:wt_algo}
%\subsubsection{The algorithm and its analysis}
%\label{subsec:wt_algo_basic}
\par Before we formally present the algorithm,
 we will need the following definition and a related result.
\begin{definition}{Two Cost Spanning Tree Problem:}
\label{def:2costmst}
Given a connected undirected graph $G = (V,E)$,
two edge weight functions, $c()$ and $l()$, and a bound $B$, find a 
spanning tree $\mathcal{T}^{*}$ of $G$ such that the total
cost $c( \mathcal{T}^*)$ is at most $B$ and the total cost $   l(  \mathcal{T}^* )$ is maximum among all spanning trees that stay within the
budget constraint.
\end{definition}

Observe that the definition is not completely
symmetric in the two weight functions; only the budget $B$ is specified. 
%In the framework of bi-criteria problems, the above problem can be expressed as ($l$-TOTAL WEIGHT, $c$-TOTAL WEIGHT) spanning tree. 
This problem has been addressed
earlier and the following bi-criteria approximation result was presented. 
\begin{theorem}[\cite{ravigoemens96constrainedmst}]
\label{thm:RGthm}
For all $\epsilon \ge 0 $, there is a polynomial time approximation algorithm for the Two Cost Spanning Tree problem with a performance of $(1, 1+\epsilon)$.  
\end{theorem}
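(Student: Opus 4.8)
The plan is to combine Lagrangian relaxation with a matroid base-exchange argument, preceded by a preprocessing step that separates the ``expensive'' edges so that the budget is exceeded by at most a $(1+\epsilon)$ factor. Throughout, $\text{OPT}$ denotes the largest $l(\mathcal{T})$ over spanning trees with $c(\mathcal{T}) \le B$.

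First I would set up the Lagrangian. For a multiplier $\lambda \ge 0$ define the combined edge weight $w_\lambda(e) = l(e) - \lambda c(e)$ and let $M(\lambda) = \max_{\mathcal{T}} \sum_{e \in \mathcal{T}} w_\lambda(e)$ be the maximum spanning tree value under $w_\lambda$, computable in polynomial time by Kruskal's algorithm. Then $L(\lambda) = \lambda B + M(\lambda)$ upper bounds $\text{OPT}$ for every $\lambda \ge 0$, since any feasible tree satisfies $l(\mathcal{T}) = w_\lambda(\mathcal{T}) + \lambda c(\mathcal{T}) \le M(\lambda) + \lambda B$. As the upper envelope of finitely many lines, $L$ is convex and piecewise linear, so its minimizer $\lambda^*$ can be found in polynomial time by binary search over the $O(m^2)$ breakpoints (or by Megiddo-style parametric search on Kruskal's algorithm).

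Next I would extract two trees straddling the budget. At $\lambda^*$ the set of $w_{\lambda^*}$-optimal spanning trees contains a tree $\mathcal{T}_{\le}$ with $c(\mathcal{T}_{\le}) \le B$ and a tree $\mathcal{T}_{>}$ with $c(\mathcal{T}_{>}) > B$ (if some optimal tree already meets the budget with equality, it is an exact optimum and we stop). Both are bases of the graphic matroid maximizing the \emph{same} linear objective $w_{\lambda^*}$, and the $w_{\lambda^*}$-optimal bases are precisely the bases of the restriction to maximum-weight-attaining elements; hence $\mathcal{T}_{\le}$ and $\mathcal{T}_{>}$ can be joined by a sequence of $w_{\lambda^*}$-optimal trees $\mathcal{T}_{\le} = S_0, S_1, \ldots, S_k = \mathcal{T}_{>}$ in which consecutive trees differ by one equal-$w_{\lambda^*}$-weight swap. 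Walking along the sequence, $c$ changes by one edge's $c$-value at each step, so there is an index $j$ with $c(S_j) \le B < c(S_{j+1})$. I would output $S_{j+1}$. Being $w_{\lambda^*}$-optimal, it satisfies $l(S_{j+1}) = M(\lambda^*) + \lambda^* c(S_{j+1}) > M(\lambda^*) + \lambda^* B = L(\lambda^*) \ge \text{OPT}$, giving factor $1$ on the objective, while its overshoot obeys $c(S_{j+1}) - B \le c(S_{j+1}) - c(S_j) \le \max_e c(e)$, i.e. it is bounded by the single swapped edge.

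The last step, and the main obstacle, is forcing this overshoot below $\epsilon B$. An edge with $c(e) > \epsilon B$ can occur at most $\lfloor 1/\epsilon \rfloor$ times in any budget-$B$ tree, so I would enumerate all $O(m^{1/\epsilon})$ subsets $F$ of such expensive edges of size at most $1/\epsilon$. For each guess I contract $F$, delete the remaining expensive edges, replace $B$ by $B - c(F)$, and run the Lagrangian/exchange procedure on the residual instance, in which every surviving edge has $c(e) \le \epsilon B$; the single swapped edge then costs at most $\epsilon B$, so the returned residual tree $S'$ has cost at most $(B - c(F)) + \epsilon B$. When $F$ equals the expensive edges of the optimal tree, $S' \cup F$ has total cost at most $(1+\epsilon)B$ and total length at least $\text{OPT}$, so returning the best tree over all guesses yields the claimed $(1, 1+\epsilon)$ performance in time polynomial in $m$ and $n$ for fixed $\epsilon$. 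The delicate points are certifying that $\mathcal{T}_{\le}$ and $\mathcal{T}_{>}$ are \emph{simultaneously} optimal at the dual optimum and that the exchange sequence never leaves the class of $w_{\lambda^*}$-optimal bases; both follow from the standard structure of maximum-weight bases of a matroid.
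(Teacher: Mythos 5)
This theorem is not proved in the paper at all; it is imported by citation from Ravi and Goemans \cite{ravigoemens96constrainedmst}, and your proposal is in essence a correct reconstruction of that cited paper's argument, transposed to the maximization form stated here: Lagrangian relaxation with $w_\lambda = l - \lambda c$, two $w_{\lambda^*}$-optimal trees straddling the budget connected by equal-weight matroid exchanges (so the overshoot is one edge's cost), and enumeration of the at most $\lfloor 1/\epsilon \rfloor$ edges of cost exceeding $\epsilon B$ in the optimum to cap that overshoot at $\epsilon B$. The only loose ends are standard and fillable --- the precise matroid-theoretic justification that maximum-weight bases are exchange-connected, the algorithmic construction of the exchange walk, and degenerate cases such as $\lambda^* = 0$ or a disconnected residual graph under a wrong guess --- so there is no genuine gap.
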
 
The result of Theorem \ref{thm:RGthm} holds even if the set of spanning trees 
are replaced by the bases of any matroid.
The reader may note that since the definition of the Two Cost Spanning Tree 
is not symmetric in the two weight functions, the $(1, 1+\epsilon)$ 
approximation does not imply  
a $(1+\epsilon,1)$ approximation for the constrained minimum spanning tree 
problem. Indeed the
the running time becomes pseudo polynomial (\cite{mad}).
\par Let $G'=(V,E')$ be a multi-graph obtained from $G$, where $E'$ is the 
set of edges that two copies of an edge $e \in G$ with weights 
$h_e$ and $l_e$ respectively.
The edges $h_e$ have cost $c_e$ and edges $l_e$ have costs $0$, for each 
$e \in E'$. 
 It is easy to see that multi-graph $G'$ retains the properties of a graphic
matroid. So we run the algorithm by \cite{ravigoemens96constrainedmst} on the multi-graph $G'$ with the budget constraint $B$. The output of
\cite{ravigoemens96constrainedmst} is a tree $\mathcal{T}$ such that $l(\mathcal{T}) \geq l(\mathcal{T}^{*}).$ Note that $\mathcal{T}$ is composed of
of a forest of high valued edges $h_e$, $\mathcal{F}_{1}$, and a forest of low 
valued edges $l_e$, $\mathcal{F}_{2}$. We must have $c(\mathcal{F}_{1}) 
\leq (1+ \epsilon) B$
and $c(\mathcal{F}_{2}) = 0$. To make the forest $\mathcal{F}_{1}$ feasible within the budget $B$, we sample the edges in $\mathcal{F}_{1}$ randomly
with probability $\frac{1}{(1+\epsilon)^2}$. That is, with probability,
 $\frac{1}{(1+\epsilon)^2}$ we retain the high valued edge $h_e \in \mathcal{F}_{1}$ otherwise
we pick the low valued copy of the edge i.e $l_e.$ We denote this randomly 
sampled subset of edges (which is a forest) $\mathcal{F}_{1}^{'}$. 
As $\mathbb{E}(l(\mathcal{F}_{1}^{'})) = \frac{1}{(1+\epsilon)^2} l(\mathcal{F}_{1})$ and $\mathbb{E}(c(\mathcal{F}_{1}^{'})) \leq \frac{1}{(1+\epsilon)}B$, we take the union of the two
forests $\mathcal{F}_{1}^{'}$ and $\mathcal{F}_{2}^{'}$ that forms a tree 
$\mathcal{T}^{'}$, since the edges are in 1-1 correspondence with 
$\mathcal{T}$. Then   
\begin{align*}
\mathbb{E}(l(\mathcal{T}^{'}))
& = \mathbb{E}(l(\mathcal{F}_{1}^{'})) + l(\mathcal{F}_{2}) \nonumber\\
& = \frac{1}{(1+\epsilon)^2} l(\mathcal{F}_{1}) + l(\mathcal{F}_{2}) \nonumber\\
& \geq \frac{1}{(1+\epsilon)^2} (l(\mathcal{F}_{1}) + l(\mathcal{F}_{2}))  = 
\frac{1}{(1+\epsilon)^2} l(\mathcal{T}) \nonumber\\
\end{align*}
Hence, we have 
\begin{align}
\mathbb{E}(l(\mathcal{T}^{'})) &\geq \frac{1}{(1+\epsilon)^2} l(\mathcal{T}^{*})   \label{eq:erl} \\
\mathbb{E}(c(\mathcal{T}^{'})) &\leq \frac{1}{(1+\epsilon)}B \label{eq:erl2} 
\end{align}
We have shown that our algorithm returns a spanning tree 
whose expected cost is feasible and has expected length at least
$\frac{1}{(1+\epsilon)^2}$ times the optimum.
%\subsubsection{The approximation guarantee}
%\label{subsec:app-gua}   
%In this section we prove the approximation guarantee for the algorithm in section \ref{sec:wt_algo}.
We shall claim the following result.
\begin{theorem}
\label{thm:wtsp_app_thm}
Given an undirected graph $G=(V,E)$, an  error parameter $\epsilon$ and a confidence parameter $\delta$, the algorithm of section \ref{sec:wt_algo} returns a 
spanning tree $\mathcal{T}^{'}$ which is feasible within the budget $B$, such that \\
\begin{center}
$\pr[ l(\mathcal{T}^{'}) \geq (1-\epsilon)l(\mathcal{T}^{*}) \text{ , } c(\mathcal{T}^{'}) \leq B] \ge 1-\delta ,$
\end{center}
in time polynomial in $|V|=n$, $|E|=m$, $\frac{1}{\epsilon}$ and $\log(\frac{1}{\delta}).$
\end{theorem}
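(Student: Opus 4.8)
The plan is to promote the two expectation guarantees (\ref{eq:erl}) and (\ref{eq:erl2}) into a single joint high-probability statement, and then to amplify the confidence to $1-\delta$ by independent repetition. First I would make the randomness explicit: letting $X_e\in\{0,1\}$ indicate that the high-valued copy of $e$ is retained, both $c(\mathcal{T}')=\sum_{e\in\mathcal{F}_1}c_e X_e$ and $l(\mathcal{T}')=l(\mathcal{F}_2)+\sum_{e\in\mathcal{F}_1}\bigl(h_e X_e+l_e(1-X_e)\bigr)$ are sums of independent bounded random variables. Because there are only $n-1$ tree edges, each sum has at most $n-1$ terms, which is the setting in which Chernoff/Hoeffding concentration applies. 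I would also run the rounding with an internal parameter $\epsilon'=\Theta(\epsilon)$ rather than $\epsilon$ itself, so that the expected-length factor $\tfrac{1}{(1+\epsilon')^2}$ leaves a $\Theta(\epsilon)$ cushion above the target factor $(1-\epsilon)$, and the expected cost $\tfrac{B}{1+\epsilon'}$ sits a $\Theta(\epsilon)$ fraction below $B$.

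Next I would establish a per-run success bound, i.e.\ a lower bound $q$ on $\pr[\,l(\mathcal{T}')\ge(1-\epsilon)l(\mathcal{T}^{*})\ \text{and}\ c(\mathcal{T}')\le B\,]$. For the cost I would apply an upper-tail concentration bound to show that $c(\mathcal{T}')\le B$ fails only with small probability, using that the mean lies a $\Theta(\epsilon)$ fraction below $B$; for the length I would apply a lower-tail bound to the random part $\sum_e(h_e-l_e)X_e$, using $l(\mathcal{T}')\le l(\mathcal{T})$ and $l(\mathcal{T})\ge l(\mathcal{T}^{*})$ so that it suffices to keep $l(\mathcal{T}')\ge(1-\epsilon)l(\mathcal{T})$. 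A union bound over the two failure events then yields $q$ bounded below by a polynomial in $\epsilon$.

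To amplify, I would run the entire procedure $t=O\!\bigl(\tfrac{1}{q}\log\tfrac{1}{\delta}\bigr)$ times with fresh randomness, discard every output with $c(\mathcal{T}')>B$ (feasibility is checkable directly, unlike the length guarantee, which references the unknown $l(\mathcal{T}^{*})$), and return the feasible tree of maximum length. With probability at least $1-(1-q)^{t}\ge 1-\delta$ some run lands in the good event; since that run is feasible with length at least $(1-\epsilon)l(\mathcal{T}^{*})$, the tree actually returned inherits the same guarantee. Each run is dominated by a single call to the algorithm of Theorem \ref{thm:RGthm}, so if $q=\Omega(\mathrm{poly}(\epsilon))$ the total running time is polynomial in $n$, $m$, $\tfrac{1}{\epsilon}$ and $\log\tfrac{1}{\delta}$, as claimed.

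The main obstacle is obtaining such a $q$ when a few edges of $\mathcal{F}_1$ have cost $c_e$ (or benefit $h_e-l_e$) that is a constant fraction of $B$ (resp.\ of $l(\mathcal{T})$): the two tails are driven by the same variables $X_e$ and are in tension, since raising the retention probability helps the length but hurts the cost, and a single heavy edge can by itself push the cost above $B$ with probability bounded away from $0$, defeating the naive concentration bound. A distribution-free Markov bound rescues feasibility but only with probability $\Theta(\epsilon)$, which is too weak to survive the union bound with the length event. The delicate step the argument must nail down is therefore the separate treatment of these heavy edges—bucketing them out and arguing about them directly—so that concentration on the remaining light edges still certifies both tails simultaneously with probability at least $\mathrm{poly}(\epsilon)$.
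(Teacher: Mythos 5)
Your skeleton is the same as the paper's: run the rounding of section~\ref{sec:wt_algo} with an internal parameter $\epsilon'$ that is a constant fraction of $\epsilon$, bound each of the two tails (length below $(1-\epsilon)l(\mathcal{T}^{*})$, cost above $B$) by Chernoff using the expectation guarantees \ref{eq:erl} and \ref{eq:erl2}, take a union bound, and amplify by independent repetition; your selection rule (discard infeasible outputs, return the longest feasible tree) is in fact the cleaner way to realize the paper's loosely stated ``perform $O(\log\frac{1}{\delta})$ trials and take the median.'' The genuine gap is that you never establish the per-run success probability $q$: your final paragraph concedes that the concentration step fails in the presence of heavy edges and only gestures at a bucketing argument that is not carried out, whereas the theorem needs a concrete lower bound on $q$ (even your weaker target $q=\Omega(\mathrm{poly}(\epsilon))$ would suffice for the claimed running time, but it is asserted, not proved). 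The step you are missing is exactly what the paper supplies: it sets $1-\epsilon=(1-\epsilon')^2$, obtains tail bounds $\boldsymbol{\exp}(-\epsilon'^{4}\mu/3)$ for the length and $\boldsymbol{\exp}(-\epsilon'^{2}\mu/3)$ for the cost, and then forces both exponents to be at least $1$ by scaling: Lemma~\ref{lemma:scaling} adds $s/n$ to every edge length (with $s=3(1+\epsilon')^2/\epsilon'^4$), which guarantees $l(\mathcal{T}^{*})\ge s$ while degrading the approximation only to $(1-2\epsilon)$ (absorbed by taking $\epsilon'=\epsilon/2$), and the costs $c_e$ are rescaled together with $B$ so that $\mathbb{E}[c(\mathcal{T}')]>3/\epsilon'^2$. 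This yields per-run failure probability $c_1+c_2<2/e$, a constant, after which $O(\log\frac{1}{\delta})$ repetitions suffice with no $1/q$ factor at all.

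That said, your diagnosis of the heavy-edge obstruction is substantive, and it is worth noting that the paper's scaling does not actually answer it: the Chernoff bounds being invoked require the individual summands to be bounded on the scale of the mean, and neither adding $s/n$ to all lengths nor multiplicatively rescaling the costs and $B$ improves the ratio of the mean to the largest summand. For instance, the tree returned by the algorithm of Theorem~\ref{thm:RGthm} may consist essentially of a single high edge of cost close to $(1+\epsilon')B$; after sampling, feasibility fails whenever that edge is retained, i.e.\ with probability $\frac{1}{(1+\epsilon')^2}$, which is close to $1$ for small $\epsilon'$ and which no rescaling of costs and budget can push below $1/e$. So if you were to carry out the bucketing of heavy-cost and heavy-benefit edges and prove $q=\Omega(\mathrm{poly}(\epsilon))$, you would not merely be matching the paper's argument but strengthening it. As written, however, your proposal stops precisely at the step where the proof has to do its work, and therefore does not yet prove the theorem.
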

\begin{proof}
It suffices to achieve the above with $\delta=c$, where $c<1$ is any arbitrary constant. The error probability can be later boosted to the given $\delta$ 
by performing $O(\log \frac{1}{\delta})$ trials with error probability $c$ and taking the median.
\par Now to run the algorithm in section \ref{sec:wt_algo} let us choose $\epsilon'$ (note that $\epsilon'$ is the parameter of the algorithm in \cite{ravigoemens96constrainedmst} and $\epsilon$ is the desired error guarantee) such that $1-\epsilon = (1-\epsilon')^2$. Therefore, 
\begin{align*}
& l(\mathcal{T}^{'}) \leq (1-\epsilon) l(\mathcal{T}^{*}) \nonumber\\
& \Leftrightarrow l(\mathcal{T}^{'}) \leq (1-\epsilon')^2 \mu (1+\epsilon')^2 (\text{where }\mu = \mathbb{E}(l(\mathcal{T}^{'})))     \nonumber\\
& \Leftrightarrow l(\mathcal{T}^{'}) \leq (1-\epsilon'^{2})^2 \mu 
\leq (1-\epsilon'^{2}) \mu \nonumber\\
\end{align*}
The first implication follows from Equation \ref{eq:erl} in section \ref{sec:wt_algo}. Therefore,
\begin{align*}
\pr[l(\mathcal{T}^{'}) \leq (1-\epsilon) l(\mathcal{T}^{*})] 
& \leq \pr[l(\mathcal{T}^{'}) \leq (1-\epsilon'^{2}) \mu]  \nonumber\\
& \leq \boldsymbol{\exp}(-\frac{\epsilon'^{4}\mu}{3}) 
 \leq \boldsymbol{\exp}(-\frac{\epsilon'^{4}\frac{1}{(1+\epsilon')^2} l(\mathcal{T}^{*})}{3}) \nonumber\\
%& \leq e^{-\frac{\epsilon'^{4}\frac{1}{(1+\epsilon')^2}(n-1)}{3}} \nonumber\\
%& = c_1 < 1 \text{ for sufficiently large $n$} \label{eq:c1}\\
\end{align*}
The second inequality follows from Chernoff bounds and the third inequality from Equation \ref{eq:erl}.
If $\boldsymbol{\exp}(-\frac{\epsilon'^{4}\frac{1}{(1+\epsilon')^2} l(\mathcal{T}^{*})}{3}) 
< \boldsymbol{\exp}(-1)$ then $\pr[l(\mathcal{T}^{'}) \leq 
(1-\epsilon) l(\mathcal{T}^{*})] < \frac{1}{e}$.
If $l(\mathcal{T}^{*}) > \frac{3(1+\epsilon')^2 }{\epsilon'^4} = s $, 
we are done, otherwise, 
we can scale the edge lengths according to the following technical Lemma. 
% \begin{lemma}
% \label{lemma:scaling}
% If we require that
% $\pr[l(\mathcal{T}^{'}) \leq (1-\epsilon) l(\mathcal{T}^{*})] \le k$, or equivalently $l(\mathcal{T}^{*})$ to be atleast $\frac{3(1+\epsilon')^2 \ln \frac{1}{k}}{\epsilon'^4}=f(k)$
% then we can scale up (additivly) the edge lengths (both $l_e$ and $h_e$) by $f(k)$. Note that scaling the weights of the ground elements of a matroid would not affect the solution. 
% If we want to bound this error probability by inverse polynomial in $n$ (say $\frac{1}{n^c}$), then choose
% $k=\frac{1}{n^c}$. And we would require $l(\mathcal{T}^{*}) \geq \frac{3c(1+\epsilon')^2 \ln n}{\epsilon'^4}$. In this case the complexity of 
% scaling is $O(m\cdot (\log W +\log \ln n))$ (where $W= \max_{e\in E} h_e$), since it takes $O(\log W +\log \ln n)$ to scale each edge. Hence, to guarantee an inverse
% ploynomial error probability the scaling can be achieved in polynomial time. However, in the prrof of the Theorem \ref{thm:wtsp_app_thm}, picking $k=\frac{1}{2}$ would 
% suffice. Since $f(k)$ is a constant so the scaling complexity in this case becomes $O(m\cdot \log W)$.          
% \end{lemma}
\begin{lemma}
 \label{lemma:scaling}
The edge lengths of the graph can be additively scaled by $\frac{s}{n}$ 
while preserving the $(1-\epsilon)$ approximation guarantee. 
\end{lemma}
\begin{proof}
%In order to guarantee probability tail bound i.e $\pr[l(\mathcal{T}^{'}) 
%\leq (1-\epsilon) l(\mathcal{T}^{*})] < \frac{1}{2}$, 
From our previous observation, we require 
$l(\mathcal{T}^{*})$ to be 
at least $\frac{3(1+\epsilon')^2 }{\epsilon'^4}=s$. So we add $\frac{s}{n}$ to 
each edge(both $l_e$ and $h_e$) of the graph. This 
ensures that $l(\mathcal{T}^{*})$ is at least $s$. Therefore, for the desired 
error guarantee $\epsilon$, we have,
\begin{align*}
& l(\mathcal{T}^{'}) + (s/n)n \geq (1-\epsilon) [l(\mathcal{T}^{*})+ (s/n)n] \nonumber\\
&\Leftrightarrow l(\mathcal{T}^{'}) + s \geq (1-\epsilon) (l(\mathcal{T}^{*})+ s)  \nonumber\\
&\Leftrightarrow l(\mathcal{T}^{'}) \geq (1-\epsilon)l(\mathcal{T}^{*}) - 
s\epsilon \geq (1-\epsilon)l(\mathcal{T}^{*}) - \epsilon l(\mathcal{T}^{*}) 
= (1-2\epsilon)l(\mathcal{T}^{*}) \nonumber\\
\end{align*} 
The last inequality follows from $l(\mathcal{T}^{*})\geq s$. 
By choosing $\epsilon' =\epsilon/2$ we obtain the desired error bound, i.e.,
\end{proof}
\begin{equation}
 \pr[l(\mathcal{T}^{'}) \leq (1-\epsilon' ) l(\mathcal{T}^{*})] = c_1 < \frac{1}{e} \label{eq:c1}
\end{equation}
\ignore{
Moreover,
\begin{align*}
& c(\mathcal{T}^{'}) \geq B \nonumber\\
& \Rightarrow c(\mathcal{T}^{'}) \geq (1+\epsilon')\mu \text{ where } \mu =
\mathbb{E}[ c  (\mathcal{T}^{'}) ]\nonumber\\ 
\end{align*}
where the implication follows from Equation \ref{eq:erl2}. \\
therefore, 
}
Using $\mu = \mathbb{E}[ c  (\mathcal{T}^{'})]$ and Equation \ref{eq:erl2}
\[ 
\pr[ c(\mathcal{T}^{'}) \geq B ] 
 \leq \pr[ c(\mathcal{T}^{'}) \geq (1+\epsilon')\mu ] 
\leq \boldsymbol{\exp}(-\frac{\epsilon'^{2}\mu}{3}) \nonumber\\
\]
The second inequality follows from Chernoff bounds. Again if 
$\boldsymbol{\exp}(-\frac{\epsilon'^{2}\mu}{3}) < \boldsymbol{\exp}(-1)$, then 
$\pr[ c(\mathcal{T}^{'}) \geq B ] < \frac{1}{e}$. 
Since $\mathbb{E}(c(\mathcal{T}^{'}))=\mu > \frac{3 }{\epsilon'^2}$ can be 
achieved by scaling the edge costs $c_e$ for each $h_e \in E$ along with the 
budget $B$. Note that $l_e$'s continue to have 0 weights 
unlike the construction in Lemma \ref{lemma:scaling}. However, the basic 
argument can be extended by considering only the edge set of $h_e$.
\begin{equation}
 \pr[ c(\mathcal{T}^{'}) \geq B ] = c_2 < \frac{1}{e} \label{eq:c2}.
\end{equation}
The polynomial running time follow from the PTAS 
for bi-criterion spanning tree given by \cite{ravigoemens96constrainedmst}.
From Equations \ref{eq:c1} and \ref{eq:c2}\\   
$\pr[ l(\mathcal{T}^{'}) \geq (1-\epsilon)l(\mathcal{T}^{*}) \text{ , } c(\mathcal{T}^{'}) \leq B] \ge 1-c, \text{ where }(c=c_1+c_2) $.
This concludes the proof of Theorem \ref{thm:wtsp_app_thm}.
\end{proof}

\begin{corollary}
\label{thm:wtsp_app_corr}
Given an undirected graph $G=(V,E)$ an error parameter $\epsilon$ the 
algorithm of section \ref{sec:wt_algo} returns a 
spanning tree $\mathcal{T}^{'}$ which is feasible within the budget $B$, such that \\
\begin{center}
$\pr[ l(\mathcal{T}^{'}) \geq (1-\epsilon)l(\mathcal{T}^{*}) \text{ , } c(\mathcal{T}^{'}) \leq B] \ge 1-\frac{1}{n} ,$
\end{center}
in time polynomial in $|V|=n$, $|E|=m$ and $\frac{1}{\epsilon}$. 
\end{corollary}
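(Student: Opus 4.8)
The plan is to derive Corollary~\ref{thm:wtsp_app_corr} as a direct specialization of Theorem~\ref{thm:wtsp_app_thm}, simply fixing the confidence parameter to the particular value $\delta = 1/n$. Theorem~\ref{thm:wtsp_app_thm} already guarantees, for \emph{any} $\delta$, that the algorithm of Section~\ref{sec:wt_algo} returns a budget-feasible tree $\mathcal{T}'$ satisfying $\pr[\, l(\mathcal{T}') \ge (1-\epsilon) l(\mathcal{T}^*),\ c(\mathcal{T}') \le B\,] \ge 1 - \delta$, in time polynomial in $n$, $m$, $\frac{1}{\epsilon}$ and $\log(\frac{1}{\delta})$. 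Substituting $\delta = 1/n$ immediately yields the stated success probability $1 - \frac{1}{n}$, so the only thing that genuinely requires checking is that the running time remains polynomial in $n$, $m$ and $\frac{1}{\epsilon}$ alone once this choice is made.

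For the running time I would observe that $\log(\frac{1}{\delta}) = \log n$ under the substitution, and $\log n$ is itself polynomial (indeed $O(\log n)$) in $n$. Hence every occurrence of the factor $\log(\frac{1}{\delta})$ in the bound of Theorem~\ref{thm:wtsp_app_thm} is absorbed into a polynomial in $n$, and the overall running time stays polynomial in $n$, $m$ and $\frac{1}{\epsilon}$, as required. Concretely, recall that the bound in Theorem~\ref{thm:wtsp_app_thm} arises from running the constant-error-probability procedure (which succeeds on both criteria simultaneously with probability $\ge 1-c$ for a fixed $c < 2/e < 1$) independently $O(\log(\frac{1}{\delta}))$ times; with $\delta = 1/n$ this is $O(\log n)$ repetitions, each invoking the polynomial-time scheme of \cite{ravigoemens96constrainedmst}.

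The one point deserving care --- and the closest thing to an obstacle --- is how the independent repetitions are combined, since the two-criteria output is not a scalar on which one can literally take a median. I would handle this by selecting, among the $O(\log n)$ trial outputs, every tree that is cost-feasible (the test $c(\mathcal{T}') \le B$ is deterministic) and returning the one of maximum length. Since each trial is simultaneously good on both events with probability $\ge 1-c$, the probability that \emph{no} trial is simultaneously good is at most $c^{\,O(\log n)} \le \frac{1}{n}$ for a suitable constant in the exponent (using $c^{K \log n} = n^{-K \log(1/c)}$ and choosing $K$ so that $K\log(1/c) \ge 1$); and whenever some trial is simultaneously good, the returned tree is cost-feasible by construction and has length at least that of the good trial, hence at least $(1-\epsilon) l(\mathcal{T}^*)$. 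This settles the probability and running-time bounds together and completes the derivation.
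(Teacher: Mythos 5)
Your proposal is correct and follows essentially the same route as the paper: the corollary is obtained by specializing Theorem~\ref{thm:wtsp_app_thm} to $\delta = \frac{1}{n}$, noting that $\log(\frac{1}{\delta}) = \log n$ keeps the running time polynomial in $n$, $m$ and $\frac{1}{\epsilon}$. Your replacement of the paper's ``take the median'' amplification with ``return the longest cost-feasible tree among $O(\log n)$ trials'' is a sound (indeed cleaner) way to handle the two-criteria output, but it does not change the substance of the derivation.
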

\subsection{Further extensions}
\label{subsec:multilevel}
We now consider a more general version of the problem, where
the edge lengths $l_e$ instead of admitting only a single improvement $h_e$, can have multiple
stages of improvement. That is for each $i \in [m]$, where $m$ is the number of edges, we are given
improved lengths $l_{i,1} \leq \cdots \leq l_{i,j(i)}$ with increasing improvement costs
$c_{i,1} \leq \cdots \leq c_{i,j(i)}$. The lengths $l_{i,0}$ has cost $c_{i,0} =0$ for each $i \in [m]$. The algorithm in section \ref{sec:wt_algo} can be extended to accommodate the multiple improvements in edge lengths. 
This is possible as each edge
$i \in [m]$ in the multi-graph, that we constructed in section \ref{sec:wt_algo}, can have $j(i)$ copies with respective costs $c_{i,1},\cdots,c_{i,j(i)}$. Again, \cite{ravigoemens96constrainedmst}
would build a tree composing of two forests of improved and non-improved edge lengths. And then the improved edge length forest can be sampled as in section \ref{sec:wt_algo} to ensure feasibility within the budget $B$.
Approximation guarantees similar to the single stage improvement 
can be proved along similar lines.

{\bf Remark} Our multistage improvement extend beyond the 0/1 upgradation
model of \cite{krumke} but it doesn't directly yield results for the
{\it integral} or the {\it rational} upgrade model of \cite{krumke}. However,
we can approximate the rational model to within any inverse polynomial by
choosing a polynomial number of parallel edges.  

Due to the equivalence of maximum weight base and minimum cost base over a 
matroid, the results and analysis done in this section would also hold for the minimization version. 

\ignore{
We shall now restate the theorem for weight improvable budget constrained minimum spanning tree.   
\begin{theorem}
\label{thm:min}
Given an undirected graph $G=(V,E)$, error parameter $\epsilon$ the algorithm of section \ref{sec:wt_algo} returns a 
spanning tree $\mathcal{T}^{'}$ which is feasible within the budget $B$, such that \\
\begin{center}
$\pr[ l(\mathcal{T}^{'}) \leq (1+\epsilon)l(\mathcal{T}^{*}) \text{ , } c(\mathcal{T}^{'}) \leq B] \ge 1-\frac{1}{n} ,$
\end{center}
in time polynomial in $|V|=n$, $|E|=m$ and $\frac{1}{\epsilon}$. 
\end{theorem}
}
\section{FPTAS for WILDAG}
\label{sec:longest-DAG}
%\subsection{The Problem Definition}
%\label{subsec:pdd}
Given a DAG, $G=(V, E)$, having $n$ vertices and $m$ edges
with edge lengths $l_{i},i \in [m]$ where 
$[m]$ $:=$ $\{1,...,m\}$. Let $LDAG(l,s,t)$
denote the problem of finding the longest $s-t$ path in $G$, for a given source
$s$ and a sink $t$. An instance of the \emph{Weight-Improvable-Longest-Path-in-DAG} is given by
the same set of vertices $[n]$, and edges $[m]$ with edge-lengths $l_{i},i \in [m]$,
source $s$ and sink $t$. Additionally, we are given the improved edge weights
$h_i$ for all $i \in [m]$, along with corresponding improvement costs $q_i$ and 
an improvement budget $B$. A problem instance of \emph{Weight-Improvable-Longest-Path-in-DAG}
is denoted by $WILDAG(l,h,q,B,s,t)$. The optimum solution is denoted by 
$WILDAG^{*}(l,h,q,B,s,t).$
\begin{theorem}
The Weight Improvable Longest path problem in DAG is NP-hard.
\label{np-hard2}
\end{theorem}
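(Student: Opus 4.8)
The plan is to reduce the (decision version of the) Knapsack problem to $WILDAG$, reusing the construction of Figure 1 that already underlies the hardness proof of Theorem \ref{nph-1}. Given a Knapsack instance with items $1,\dots,n$, profits $p_i$, costs $c_i$ and budget $B$, I would build a DAG on vertices $v_0,v_1,\dots,v_n$ consisting of a single directed chain $v_0 \to v_1 \to \cdots \to v_n$, where the edge from $v_{i-1}$ to $v_i$ has base length $l_i = 0$ and improved length $h_i = p_i$ with improvement cost $q_i = c_i$. Setting the source $s = v_0$, the sink $t = v_n$, and keeping the same budget $B$ completes the construction, which is clearly polynomial in the size of the Knapsack instance.

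The key observation is that the DAG is a simple directed path, so every $s$--$t$ path is forced to traverse all $n$ edges in order; there is no freedom in the vertex sequence. Consequently the only decisions left to an algorithm for $WILDAG$ are which edges to improve, subject to the total improvement cost being at most $B$. Improving the $i$-th edge contributes $p_i$ to the path length at a cost of $c_i$, while leaving it unimproved contributes $0$ at cost $0$. This yields an exact bijection between feasible improvement strategies on the chain and feasible item selections $S \subseteq \{1,\dots,n\}$ for the Knapsack instance: the length of the resulting $s$--$t$ path equals $\sum_{i \in S} p_i$ and its improvement cost equals $\sum_{i \in S} c_i \le B$.

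Therefore $WILDAG^{*}(l,h,q,B,s,t) = \max_{S : \sum_{i\in S} c_i \le B} \sum_{i\in S} p_i$, which is precisely the optimum of the Knapsack instance. A polynomial-time algorithm (or exact decision procedure) for $WILDAG$ would thus solve Knapsack in polynomial time, establishing that $WILDAG$ is NP-hard.

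There is no real technical obstacle here; the reduction is immediate once one recognizes that forcing the path through a chain removes all combinatorial choice except the improvement decisions. The only point deserving a sentence of care is verifying that the longest-path objective genuinely collapses to the Knapsack objective --- that is, that no $s$--$t$ path can skip a segment or gain length by any means other than improving an edge --- which holds because the construction admits exactly one vertex-path.
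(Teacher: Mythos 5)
Your proposal is correct and is essentially the paper's own proof: the paper also reduces Knapsack to $WILDAG$ via the chain construction of Figure 1, where each item becomes a directed edge whose unimproved version has (weight, cost) $= (0,0)$ and whose improved version has $(p_e, c_e)$, so that improvement decisions correspond exactly to item selections under budget $B$. You simply spell out the bijection argument that the paper leaves implicit.
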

\begin{proof}
The Knapsack problem is also polynomial time reducible to the WILDAG problem
using a construction similar to Figure 1. The two directed edges have
associated tuples $( weight , cost )$ as $(0, 0)$ and $( p_e , c_e )$ 
respectively where the knapsack problem is given by 
\[ \max \sum_e p_e x_e \ \ s.t. \ \ \sum_e c_e \leq B \]
Here $B$ is the budget for improvement. 
\end{proof}
\subsection{A pseudo-polynomial algorithm for WILDAG }
\label{subsec:non-uni}
%\subsection{WILDAG with uniform improvement costs}
%\label{subsec:uni}
We can observe that if the query costs $q_i$ for all $i \in [m]$ are uniform, 
say q, 
then the query costs and improvement budget $B$ can be accordingly scaled, so 
that the improvement budget is basically the number of improvements allowed, say b.
Also note that since $G$ is a DAG, so the longest $s-t$ path can have at most
$n-1$ edges. Therefore $b \geq n-1$ is as good as infinite budget, so the interesting case
is only when $b<n-1.$ \\
We now give a dynamic programming formulation for $WILDAG$
with uniform query costs. Let $T=<v_1,.....,v_n>$ be a topological ordering of the
vertices of the $DAG$ $G$. Without loss of generality let us assume
$v_1=s$ and $v_n=t$. Let $L(v_i,q)$ denote the length of the longest $v_i-t$ 
path using the vertices $\{v_i,v_{i+1},....,t\}$ and 
at most $q$ improved edge-weights.
\begin{equation}
L(v_i,q)=\smash{\displaystyle\max_{ j: (i,j) \in E }} \smash{\displaystyle\max_{}} \{ L(v_j,q)+l_{e=(i,j)}, L(v_j,q-1) + h_{e=(i,j)} \} \label{eq:dp1} 
\end{equation}
This can be seen as follows. Let $S \subseteq T$ denote the set of vertices from
which the longest path to $t$ using at most $q$ improved edges is known. The invariant
for Equation \ref{eq:dp1} is that for each vertex $v \in S$, the length of the longest
path from $v$ to $t$ using at most $q$ improved edges is known. Thus the proof of correctness of \ref{eq:dp1} follows 
from the induction on size of $S$. Hence the longest path from $s$ to $t$ using at most $b$ improved edges is either by taking the improved edge $h_e$ from $s$ to $v_i$ and taking
the longest path from $v_i$ to $t$ using at most $b-1$ improved edges or by taking the un-improved edge $l_e$ from $s$ to $v_i$ and taking the longest path from $v_i$ to $t$ using 
at most $b$ improved edges, and maximizing over all neighbors $v_i$ of $s$.       
Use the base cases as $L(t,q)= 0$ for all $q \in [b] $, and $L(v_i,x) = -\infty $ for all $i \in [m] $ and $x<0$.\\
\ignore{
The algorithm is presented below in Figure~\ref{fig:uni-cost}, and we prove the correctness in \ref{subsubsec:uni-correct}.
\begin{figure}[h!]
{\footnotesize
\begin{center}
\fbox{
\begin{minipage}{\textwidth}\sf
\begin{tabbing}
xxx\=xxxx\=xxxx\=xxxx\=xxxx\=xxxx\= \kill
\>{\bf Algorithm} {\tt Uni-Cost-Imp-DAG}( DAG $G$, $T=<v_1=s,....,v_n=t>$, $b$, $\bar l$, $\bar h$ ) \ \ \ \ \\
\>{\bf Input}: \> \> - DAG $G$, \ \ \ \\
 \> \> \> - the topological ordering of the DAG $G$ i.e, $T=<v_1,....,v_n>$\\
\> \> \> - the improvement budget $b$ \\
\> \> \> - the matrix of original edge weights $\bar l = [l_1 ..... l_m]^T$ \\
\> \> \> - the matrix of improved edge weights $\bar h = [h_1 ..... h_m]^T$ \\
\> {\bf Output}: the length of the longest $s-t$ path using at most $b$ improved edge-weights \\
\\
\> Create the table $L = n \times n$; Set $L(t,q)= 0$ for all $q \in [b] $, $L(v_i,x) = -\infty $ for all $i \in [n] $ and $x<0$ ;\\
\> \textbf{for} $v_i=v_n,....,v_1$ \textbf{do} \\
\>\> \textbf{for} $q=0,....,b$ \textbf{do} \\
\>\> $L(v_i,q)=$ according to Equation \ref{eq:dp1}\\
\>\> \textbf{end for} \\
\> \textbf{end for} \\
\> \textbf{return} $L(s,b)$;
\end{tabbing}
\end{minipage}
}
\end{center}
}
\caption{A pseudo-polynomial time algorithm for WILDAG}
\label{fig:uni-cost}
\end{figure}
\subsubsection{Correctness and Running Time}
\label{subsubsec:uni-correct}
The correctness of the algorithm in Figure \ref{fig:uni-cost} follows from 
the correctness of recurrence Equation \ref{eq:dp1}. 
}%ignore
\begin{lemma}
The dynamic programming algorithm takes $O(n^3)$ time. 
\end{lemma}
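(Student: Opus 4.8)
The plan is a standard running-time accounting for the dynamic program defined by recurrence \ref{eq:dp1}. First I would bound the size of the table $L$. Its entries are indexed by a vertex $v_i$ (with $i \in [n]$) together with an improvement count $q$ ranging from $0$ to $b$. As observed immediately before the recurrence, the only interesting regime is $b < n-1$, so there are $O(n)$ admissible values of $q$ and hence at most $O(n^2)$ cells to fill.

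Next I would bound the work spent on a single cell. To evaluate $L(v_i,q)$ via \ref{eq:dp1} we loop over every out-neighbour $j$ of $v_i$, and for each edge $(i,j)$ we compare two candidate values, each the sum of an already-computed table entry ($L(v_j,q)$ or $L(v_j,q-1)$) with an edge length ($l_{(i,j)}$ or $h_{(i,j)}$). Because the table is filled in reverse topological order of $v_i$ and in increasing order of $q$, both referenced entries are available in $O(1)$ time when they are needed, so evaluating $L(v_i,q)$ costs $O(\deg^+(v_i))$, proportional to the out-degree of $v_i$.

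Finally I would sum the per-cell cost. For a fixed $q$, summing over all vertices gives $\sum_{i} O(\deg^+(v_i)) = O(m)$, since the out-degrees sum to the number of edges. Multiplying by the $O(n)$ admissible values of $q$ yields $O(n m)$ total time, and invoking $m \le \binom{n}{2} = O(n^2)$ for a simple DAG gives the claimed $O(n^3)$ bound. (A cruder charge of $O(n)$ per cell over $O(n^2)$ cells gives the same $O(n^3)$, but the degree-sum argument is cleaner and exposes the $m$ dependence.)

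There is essentially no obstacle here; the proof is routine bookkeeping. The only two points that genuinely require care are invoking the $b < n-1$ observation to cap the number of $q$-values at $O(n)$, and charging the inner loop to out-degrees so that the double sum collapses to $O(nm)$ rather than being overcounted.
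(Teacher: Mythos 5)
Your proof is correct and follows essentially the same accounting as the paper's: the paper simply charges $O(n)$ per cell over the $n \times O(n)$ table (your ``cruder'' variant), while your degree-sum refinement additionally exposes the sharper $O(nm)$ bound before relaxing it to $O(n^3)$.
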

\begin{proof}
 Each entry in the table can be computed in $O(n)$ time. The number of vertices in the DAG $G$ is $n$ and the number of improvements allowed is at most $n-2$. Hence the result follows.
\end{proof}
Now let us consider that the improvement costs $q_e$ for all $e\in E$ that may 
not be uniform, and the improvement budget is $B$. We propose the following
dynamic programming formulation. Let $L(v_i,w)$ denote the minimum improvement budget used such that the path from $v_i$ to $t$ using only the vertices 
$\{v_i,v_{i+1}....,t\}$ has length $w$. Let $W= \smash{\displaystyle\max_{ e \in E }}$ $h_e $, then $nW$ is a trivial upper bound on the length
of the longest $s-t$ path in the DAG $G$. The dynamic programming recurrence
follows along the lines of Equation \ref{eq:dp1}
\begin{equation}
L(v_i,w)=\smash{\displaystyle\min_{ j: (i,j) \in E }} \smash{\displaystyle\min_{}} \{ L(v_j,w-h_{e=(i,j)})+q_e, L(v_j,w-l_{e=(i,j)})\} \label{eq:dp2}  
\end{equation}
Use the base cases as $L(t,w)= \infty$ for all $w>0 $, $L(v_i,x) = \infty $ for all $i \in [n] $ and $x<0$, and $L(t,0)=0$. \\
The algorithm is presented below in Figure \ref{fig:non-uni-cost}, and we prove the correctness in \ref{subsubsec:non-uni-correct}.
\begin{figure}[h!]
{\footnotesize
\begin{center}
\fbox{
\begin{minipage}{\textwidth}\sf
\begin{tabbing}
xxx\=xxxx\=xxxx\=xxxx\=xxxx\=xxxx\= \kill
\>{\bf Algorithm} {\tt NonUni-Cost-Imp-DAG}( DAG $G$, $T=<v_1=s,....,v_n=t>$, $B$, $\bar q$ ,$\bar l$, $\bar h$ ) \ \ \ \ \\
\>{\bf Input}: \> \> - DAG $G$, \ \ \ \\
 \> \> \> - the topological ordering of the DAG $G$ i.e, $T=<v_1,....,v_n>$\\
\> \> \> - the improvement budget $B$ \\
\> \> \> - the matrix of the improvement costs $\bar q=[q_1 ..... q_m]^T $ \\ 
\> \> \> - the matrix of original edge weights $\bar l = [l_1 ..... l_m]^T$ \\
\> \> \> - the matrix of improved edge weights $\bar h = [h_1 ..... h_m]^T$ \\
\> {\bf Output}: the length of the longest $s-t$ path spending at most budget $B$ on improved \\
\> edge-weights \\
\\
\> Create the table $L = n \times nW $; $L(t,w)= \infty$ for all $w>0 $, $L(v_i,x) = \infty $ for all $i \in [n] $ and $x<0$,\\ 
\> and $L(t,0)=0$ ;\\
\> \textbf{for} $v_i=v_n,....,v_1$ \textbf{do} \\
\>\> \textbf{for} $w=0,....,nW$ \textbf{do} \\
\>\> $L(v_i,w)=$ according to Equation \ref{eq:dp2}\\
\>\> \textbf{end for} \\
\> \textbf{end for} \\
\> \textbf{return} $\smash{\displaystyle\arg max_{w}} \{ L(s,w) \leq B\} $; 
\end{tabbing}
\end{minipage}
}
\end{center}
}
\caption{A pseudo polynomial time algorithm for WILDAG with non uniform query costs}
\label{fig:non-uni-cost}
\end{figure}
\subsubsection{Correctness and Running Time}
\label{subsubsec:non-uni-correct}
The correctness of the algorithm in Figure \ref{fig:non-uni-cost} follows 
from the correctness of recurrence Equation \ref{eq:dp2}.
Let $S \subseteq T$ denote the set of vertices, such that the minimum improvement
cost incurred on paths of length $w$ from each vertex in $S$ to $t$ is known.
Maintaining this invariant and inducting over the size of $S$ gives the correctness of \ref{eq:dp2}.  
The minimum improvement budget used such that the path from $s$ to $t$ has length $w$ is either incurring a cost $q_e$ by taking the 
higher valued edge $h_e$ to $v_i$ and then incurring the minimum improvement budget such that the path from $v_i$ to $t$ (using vertices $\{v_{i},\cdots,t\}$ in the topological ordering $T$) has length
$w-h_e$, or not incurring any cost by taking the lower valued edge $l_e$ to $v_i$ and then incurring the minimum improvement budget such that the path from $v_i$ to $t$ 
has length $w-l_e$, and minimizing over all neighbors $v_i$ of $s$. So, the length
of the longest path form $s$ to $t$ is the largest $w$ over all $s-t$ paths of length $w$ such that the minimum improvement budget for the path is at most $B$.      
\begin{lemma}
The algorithm in Figure~\ref{fig:non-uni-cost} takes $O(n^3W)$ time. 
\end{lemma}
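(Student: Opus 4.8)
The plan is to bound the running time by the standard product of "number of table cells" times "work per cell," exactly as in the preceding $O(n^3)$ lemma for the uniform-cost recurrence. First I would fix the dimensions of the table $L$. The vertex index ranges over the $n$ vertices $v_1, \ldots, v_n$ of the topological order, and the length index $w$ ranges over $0, 1, \ldots, nW$, where $W = \max_{e \in E} h_e$ and $nW$ is the stated trivial upper bound on the length of any $s$--$t$ path (since a path in the DAG has at most $n-1$ edges, each of length at most $W$). Hence the table has $(n)(nW + 1) = O(n^2 W)$ entries to fill.

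Second, I would bound the cost of evaluating a single entry $L(v_i, w)$ via recurrence \eqref{eq:dp2}. Computing this value requires taking a minimum over all out-neighbors $j$ with $(i,j) \in E$, and for each such neighbor performing a constant amount of work (two table lookups $L(v_j, w - h_e)$ and $L(v_j, w - l_e)$, one addition of $q_e$, and a comparison). Thus the time to fill the cell $(v_i, w)$ is proportional to the out-degree of $v_i$, which is at most $n-1 = O(n)$.

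Combining the two bounds, the total time is $O(n^2 W) \cdot O(n) = O(n^3 W)$, as claimed. I would also note the sharper accounting available: summing the per-cell cost over all cells gives $\sum_{w=0}^{nW} \sum_{i} O(\deg^+(v_i)) = O(nW) \cdot O(m) = O(nmW)$, which reduces to $O(n^3 W)$ upon using $m = O(n^2)$; this makes explicit that the bound is driven by the product of the $O(n)$ vertices, the $O(nW)$ length values, and the $O(n)$ neighbors scanned per cell. There is no real obstacle here beyond correctly identifying the length dimension as $O(nW)$ rather than $O(W)$; the only point requiring a word of justification is why $nW$ (and not some larger quantity) upper bounds the path length, which follows immediately from the path having at most $n-1$ edges in the DAG.
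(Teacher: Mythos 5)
Your proof is correct and matches the paper's argument exactly: the table has $n \times O(nW)$ entries, each filled in $O(n)$ time via the minimum over out-neighbors in recurrence \eqref{eq:dp2}, giving $O(n^3W)$ total. The additional degree-based accounting yielding $O(nmW)$ is a fine refinement but not needed for the stated bound.
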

\begin{proof}
 Each entry in the table can be computed in $O(n)$ time. The number of 
vertices in the DAG $G$ is $n$ and the upper bound on the length of the longest $s-t$ path is $nW$.
Hence the result follows.
\end{proof}
{\bf Remark} For uniform cost, i.e., $q_e = 1$, we can rewrite the dynamic
programming in terms of budget to obtain a polynomial time algorithm. 
\subsection{A FPTAS for WILDAG }
\label{sec:fptas}
Next, by scaling the lengths,
we convert the previous pseudo polynomial time algorithm (in Figure~\ref{fig:non-uni-cost}) into
an efficient version by compromising with an approximation factor in the 
objective of attaining the longest length $s-t$ path. 
\ignore{
Suppose we want to compute
a solution with an objective value of at least $(1-\epsilon)WILDAG^{*}(l,h,q,B,s,t)$.
We use the scaling, for any edge $e \in E$, we consider its new lengths 
$l'_{e}= \left \lfloor{\frac{l_e}{K}}\right \rfloor $ and
$h'_{e}= \left \lfloor{\frac{h_e}{K}}\right \rfloor $, where
$K=\frac{\epsilon W}{n}$ and use this to run the dynamic programming in Figure~\ref{fig:non-uni-cost}.
We present the following algorithm  
\begin{figure}[h!]
{\footnotesize
\begin{center}
\fbox{
\begin{minipage}{\textwidth}\sf
\begin{tabbing}
xxx\=xxxx\=xxxx\=xxxx\=xxxx\=xxxx\= \kill
\>{\bf Algorithm} {\tt } \\
\>{\bf Input}: \> \> - A problem instance of WILDAG \ \ \ \\
 \> \> \> - $\epsilon > 0$\\

\>{\bf Output}:\> \> - the length of the path which is at least $(1-\epsilon)$ times the longest $s-t$ path \\
 \> \> \> - spending at most budget $B$ on improved edge-weights.\\
\\
\> Set $K=\frac{\epsilon W}{n}$. Let $l'_{e}= \left \lfloor{\frac{l_e}{K}}\right \rfloor $ and $h'_{e}= \left \lfloor{\frac{h_e}{K}}\right \rfloor $ for all $e \in E$ \\
\> Solve the instance WILDAG() using Algorithm in Figure~\ref{fig:non-uni-cost}. Let $L'$ be the resulting solution. \\
\> \textbf{return} $L'$ 
\end{tabbing}
\end{minipage}
}
\end{center}
}
\caption{A FPTAS for WILDAG with non uniform improvement costs}
\label{fig:fptas}
\end{figure}
}%ignore
Using $W=O(n/\epsilon)$, the running time of the algorithm in Figure~\ref{fig:non-uni-cost}  is $O(n^3.\frac{n}{\epsilon})$, which is similar to the classic FPTAS for Knapsack \cite{vaz}.
Thus the theorem follows.
\begin{theorem}
The dynamic programming algorithm for WILDAG (Figure~\ref{fig:non-uni-cost}) with non uniform improvement costs 
returns a solution with objective value at least $(1-\epsilon)WILDAG^*$ in 
$O(\frac{n^4}{\epsilon})$ time. 
\end{theorem}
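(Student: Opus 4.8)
The plan is to treat the pseudo-polynomial algorithm of Figure~\ref{fig:non-uni-cost} as a black box and apply the standard rounding-and-scaling technique, controlling the error introduced by rounding the edge lengths exactly as in the knapsack FPTAS \cite{vaz}. The key observation I would exploit is that scaling perturbs only the \emph{objective} (the lengths), while the feasibility of an improvement set depends only on the costs $q_e$ and the budget $B$, which I leave untouched.

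First I would set up the scaled instance. With $K = \frac{\epsilon W}{n}$, replace each length by $l'_e = \lfloor l_e/K \rfloor$ and $h'_e = \lfloor h_e/K \rfloor$, keeping $q_e$ and $B$ fixed. Since the budget constraint is unchanged, a set of improved edges is feasible in the scaled instance iff it is feasible in the original, so the feasible region is preserved. The new maximum improved length is $W' = \lfloor W/K \rfloor = \lfloor n/\epsilon \rfloor$, so $nW'$ is the length bound for the scaled run; by the running-time Lemma for Figure~\ref{fig:non-uni-cost} this costs $O(n^3 W') = O(n^3 \cdot \tfrac{n}{\epsilon}) = O(\tfrac{n^4}{\epsilon})$ time, which is the claimed bound.

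For the approximation guarantee, let $P^*$ be an optimal $s$-$t$ path realizing $WILDAG^*$ within budget $B$, and recall it uses at most $n-1$ edges; write $a_e \in \{l_e, h_e\}$ for the length chosen on each edge $e \in P^*$. Using $a'_e \ge a_e/K - 1$, the scaled length of $P^*$ is at least $WILDAG^*/K - (n-1)$, and $P^*$ stays feasible. The algorithm returns a path $P'$ that is optimal for the scaled instance under the same budget, so its scaled length is at least that of $P^*$; unscaling via $a_e \ge K a'_e$ yields
\begin{equation}
\text{length}(P') \ge K \sum_{e \in P'} a'_e \ge K\left(\frac{WILDAG^*}{K} - (n-1)\right) = WILDAG^* - K(n-1). \nonumber
\end{equation}

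It then remains to bound the additive error: $K(n-1) = \frac{\epsilon W}{n}(n-1) < \epsilon W$, so the proof closes once $W \le WILDAG^*$. This last inequality is the main (and only genuinely delicate) step, and I expect it to be the chief obstacle. I would handle it by a preprocessing pass: discard edges lying on no $s$-$t$ path, and reset $h_e \leftarrow l_e,\ q_e \leftarrow 0$ for every edge with $q_e > B$ (such an improvement can never be purchased), neither of which alters the optimum or $W$. After this cleanup the edge $e^*$ attaining $W = \max_e h_e$ lies on some $s$-$t$ path and satisfies $q_{e^*} \le B$; improving only $e^*$ and using the (nonnegative) unimproved lengths elsewhere gives a feasible path of length at least $W$, so $WILDAG^* \ge W$. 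Substituting gives $\text{length}(P') \ge (1-\epsilon)WILDAG^*$, completing the argument.
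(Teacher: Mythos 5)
Your proposal is correct and takes essentially the same route as the paper: round the lengths with $K=\frac{\epsilon W}{n}$, run the pseudo-polynomial dynamic program of Figure~\ref{fig:non-uni-cost} on the scaled instance to get the $O(n^3\cdot\frac{n}{\epsilon})$ running time, and bound the rounding loss by an additive $K(n-1)<\epsilon W$ exactly as in the knapsack FPTAS. Your write-up is in fact more complete than the paper's (which only asserts the analysis is ``similar to the classic FPTAS for Knapsack''): the step $W\le WILDAG^*$, which you secure by discarding edges on no $s$-$t$ path and neutralizing improvements with $q_e>B$, is precisely what the paper leaves implicit, and your only slip is the harmless claim that this cleanup leaves $W$ unchanged --- it may decrease $W$, but recomputing $W$ on the cleaned instance only tightens the bound.
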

{\bf Remark} In this section we have considered the problem of Weight Improvable Longest Path in a DAG but similar dynamic programming formulation extends to 
the problem of Weight Improvable Shortest Path in a DAG. 

\section{Concluding Remarks}
\ignore{
We have proposed a fully polynomial time approximation scheme for the problem of weight improvable longest path in a DAG when there is a budget constraint on the cost incurred for improvement. Our main contribution in this paper has been to give a randomized algorithm 
for the problem of weight improvable budget constrained maximum spanning tree, which returns a spanning tree 
$\mathcal{T}^{'}$ which is feasible within the budget $B$, such that
$\pr[ l(\mathcal{T}^{'}) \geq (1-\epsilon)\text{OPT}\text{ , } c(\mathcal{T}^{'}) \leq B] \ge 1-\frac{1}{n}$, for any fixed $\epsilon >0$,
in time polynomial in $|V|=n$, $|E|=m$. 
Our algorithm in section \ref{sec:wt_algo} (later extended in \ref{subsec:multilevel}), for the weight improvable budget constrained maximum spanning tree, handles discrete levels of improvements
on the edge weights.
}%ignore
It remains an open question if we can design a polynomial time algorithm for 
the uniform cost improvement version of the constrained MST.
Further, it would be interesting to extend our techniques so that we can
handle even continuous improvements on edge weights,
more specifically, like the rational update model in \cite{krumke}.
%, which admits a nonnegative cost function $C_e(.)$ on the edges
%which indicates how expensive is it to improve the edge by a certain amount, and $L_{max}(e)$
%which is the best length to which the edge can be improved. 
%Thus we leave it as an open problem whether the notion of random sampling on top of bi-criterion approximation to ensure feasibility with high probability can be used to approximation guarantees better than \cite{krumke}.
\\[0.1in]
{\bf Acknowledgement} The authors wish to thank Marc Goerigk for valuable
discussions and an anonymous reviewer for his suggestions on improving the
presentation of an earlier version of this paper.
{\small 
\bibliography{writeup}
}
\end{document}